\newtheorem{thm}{Theorem}
\newtheorem{lemma}[thm]{Lemma}
\newtheorem{remark}[thm]{Remark}
\newtheorem{definition}[thm]{Definition}
\newtheorem{example}[thm]{Example}
\newcommand{\floor}[1]{\left\lfloor#1\right\rfloor}
\newcommand{\TV}{tree-node }
\newcommand{\TVs}{tree-nodes }
\newcommand{\problemname}{\textsc{Summarize}\ }
\DeclareMathOperator*{\argmax}{argmax}
\begin{document}

\title{Hierarchical Summarization of Metric Changes}

\numberofauthors{3}
\author{
\alignauthor
Matthias Ruhl\\
       \affaddr{Accompani Inc}\\
       \affaddr{382 1st St}\\
       \affaddr{Los Altos, CA, USA}\\
       \email{matthias@accompani.com}
\alignauthor
Mukund Sundararajan\\
       \affaddr{Google Research}\\
       \affaddr{1600 Amphitheatre Pkway}\\
       \affaddr{Mountain View, CA, USA}\\
       \email{mukunds@google.com}
\alignauthor
Qiqi Yan\\
       \affaddr{Google Research}\\
       \affaddr{1600 Amphitheatre Pkway}\\
       \affaddr{Mountain View, CA, USA}\\
       \email{contact@qiqiyan.com}
}

\maketitle

\begin{abstract}
We study changes in metrics that are defined on a cartesian product of
trees. Such metrics occur naturally in many practical applications,
where a global metric (such as revenue) can be broken down along
several hierarchical dimensions (such as location, gender, etc).

Given a change in such a metric, our goal is to identify a small set
of non-overlapping data segments that account for the change. An
organization interested in improving the metric can then focus their
attention on these data segments.

Our key contribution is an algorithm that mimics the operation of a
hierarchical organization of analysts. The algorithm has been
successfully applied, for example within Google Adwords to help
advertisers triage the performance of their advertising campaigns.

We show that the algorithm is optimal for two dimensions, and has an
approximation ratio $\log^{d-2}(n+1)$ for $d \geq 3$
dimensions, where $n$ is the number of input data segments. For the Adwords application, we can show that our algorithm is in fact a $2$-approximation.

Mathematically, we identify a certain data pattern called a \emph{conflict} that both guides the design of the algorithm, and plays a central role in the hardness results.
We use these conflicts to both derive a lower bound of $1.144^{d-2}$ (again $d\geq3$) for our algorithm, and to show that the problem is NP-hard, justifying the focus on approximation.
\end{abstract}

\section{Motivation}

Organizations use metrics to track, analyze, and improve the
performance of their businesses. The organization might be a company,
a government, an advertiser or a website developer. And the metric
might be the revenue of the company, the level of unemployment in a
country, or the number of clicks for an online advertising campaign.
Indeed, our interest in this problem stems from creating tools that
help analysts reason about Google's revenue, and to help Google's
advertisers reason about the performance of their advertising
campaigns.

Metrics vary because of changes in the business environment. One
common task of a data scientist is to determine what drives changes in
a metric over time. In particular, they want to identify segments of
the business where the change is most pronounced. This helps
decision-makers within the organization to counter these changes if
they are negative, or amplify the changes if they are positive.

\begin{example}
\label{ex:running}
Consider a government data scientist analyzing an increase in
unemployment. She does this by comparing the current month's
employment data to the previous month's data to figure out what caused
the increase.

The domain of this data - the employment market - can be sliced along
many dimensions, such as geography, industry sector, demographics,
etc. into a very large collection of granular submarkets, each of
which has its own variation in employment. Naturally, this analysis
proceeds in two steps: 1) The summarization step: Identify a
\emph{small} set of submarkets which account for a \emph{majority} of
the variation in overall unemployment. 2) Design fixes for negative
trends. This second step is most often manually intensive and case
specific. Therefore, one hopes that the first step narrows focus
meaningfully.
\end{example}

It is commonly observed that hierarchical data lends itself naturally
to summarization (cf. OLAP~\cite{OLAP}). For instance, the geography
dimension in the above example has a natural hierarchy: metros,
states, countries, and so on. If all the metros in a state have
similar unemployment trends, it is more concise to report the state
slice as an output of the summary rather than each metro
separately. Industry sectors and demographics also have similar
hierarchical representations. Organizations are similarly structured
hierarchically. The hierarchies aid the allocation, coordination and
supervision of tasks that are intended to improve organizational
metrics (cf. Organizational Theory~\cite{orgtheory}). Just as
hierarchies in the data inform step 1 from Example~\ref{ex:running},
hierarchies in the organization help effective delegation required to
accomplish step 2. For instance, many companies have separate
hierarchies for Sales, Finance and Product functions. Data scientists
can help analyze performance metrics and assign responsibilities for
fixes to the right substructure of the right functional hierarchy.

When determining the submarkets that drive a metric change in step 1,
it is important to avoid ``double-counting'', which can happen if the
resulting submarkets are not independent. This is a problem, since it
is possible that the same drop in employment is visible in several
overlapping slices of data. For instance, a regional slice
(e.g. California), and an industry sector (e.g. construction), may
both account for the same underlying change in
unemployment. Double-counting causes several conceptual problems. It
might prevent responsibilities for a fix from being clearly assigned,
it might lead to duplicated efforts from two different parts of an
organization, or it might lead to the illusion of greater progress
than was actually made. For instance, if the construction division and
the California division both work on the same issue, or both take
credit for a fix, then this credit assignment does not add up. We will
therefore insist that the list of submarkets from step 1 are
\emph{non-overlapping}.

\textbf{Informal problem statement:} Identify a small list of non-overlapping sub-segments that
account for the majority of a variation in a key metric, where the
space of candidate sub-segments is determined by a few, hierarchical
dimensions.

As we discuss in Section~\ref{sec:related} in greater detail, there
are several formulations of the ``drill-down'' problem. Most of these
formulations attempt to summarize patterns in the data. They use a
combination of information-theoretic models and input from a human
analyst to do so. In contrast, we seek to summarize the source of a
high-level change in a metric. As we will show, this problem is more
amenable to nearly complete automation. Our model is inspired directly
by the excellent models of Fagin et al~\cite{Fagin} and
Sarawagi~\cite{Sarawagi}. Unfortunately, these papers used lattices to
model the hierarchical data. This leads to the combination of strong
hardness results and algorithms that work only for very a restricted
classes of hierarchies (essentially a single tree). The key
contribution of this paper is to model hierarchical data as a product
of trees, this leads us to identify an interesting dynamic-programming
algorithm called Cascading Analysts. This algorithm has good
worst-case performance guarantees, and it is natural---it mimics the
operation of a hierarchical organization of analysts. As we discuss in
Section~\ref{sec:application}, we have applied this algorithm within
large decision support products. One of these products helps
advertisers reason about their advertising spend, and the other helps
website developers reason about their traffic.

\section{Problem Statement}
\label{sec:problem}

With the motivation from the previous section, we are ready to state
our model and problem formally.

\subsection{Definitions}

We consider multi-dimensional datasets where each dimension can be
modeled as a rooted tree. In a single (rooted) tree $T$, we say that
two tree-nodes $p$ and $q$ {\em overlap} if either $p=q$ or they share
an ancestor-descendant relationship, otherwise they are called {\em
  non-overlapping}.

We extend this definition to a cartesian product of trees.  For a product of
trees $P =T_1 \times T_2 \times \dots \times T_d$, we say that two
nodes $p=(p_1,\dots,p_d)$ and $q=(q_1,\dots,q_d)$ overlap iff for
\emph{every} dimension $i$, the tree-nodes $p_i$ and $q_i$ overlap. Consequently,
if a pair of nodes does not overlap, then there exists a (possibly
non-unique) dimension $i$ such that the tree-nodes $p_i$ and $q_i$ do
not overlap, and we say that nodes $p$ and $q$ do not overlap
\emph{along} dimension $i$.

A node $p=(p_1,\dots,p_d)$ is in the subspace of a node
$q=(q_1,\dots,q_d)$ if for \emph{every} $i$, $p_i$ is either a descendant of
$q_i$ in the tree $T_i$, or $p_i = q_i$. We define $Sub(v)$ to be the set of nodes
that are in the subspace of $v$.  We denote the root of the product of the
trees as $r = (r_1, \ldots r_d)$, where each $r_i$ is the root of
$T_i$.

Finally, a set $S$ of nodes is \emph{overlap-free} if no two nodes in
$S$ overlap.

\begin{figure}
  \begin{center}
    \includegraphics[width=3in]{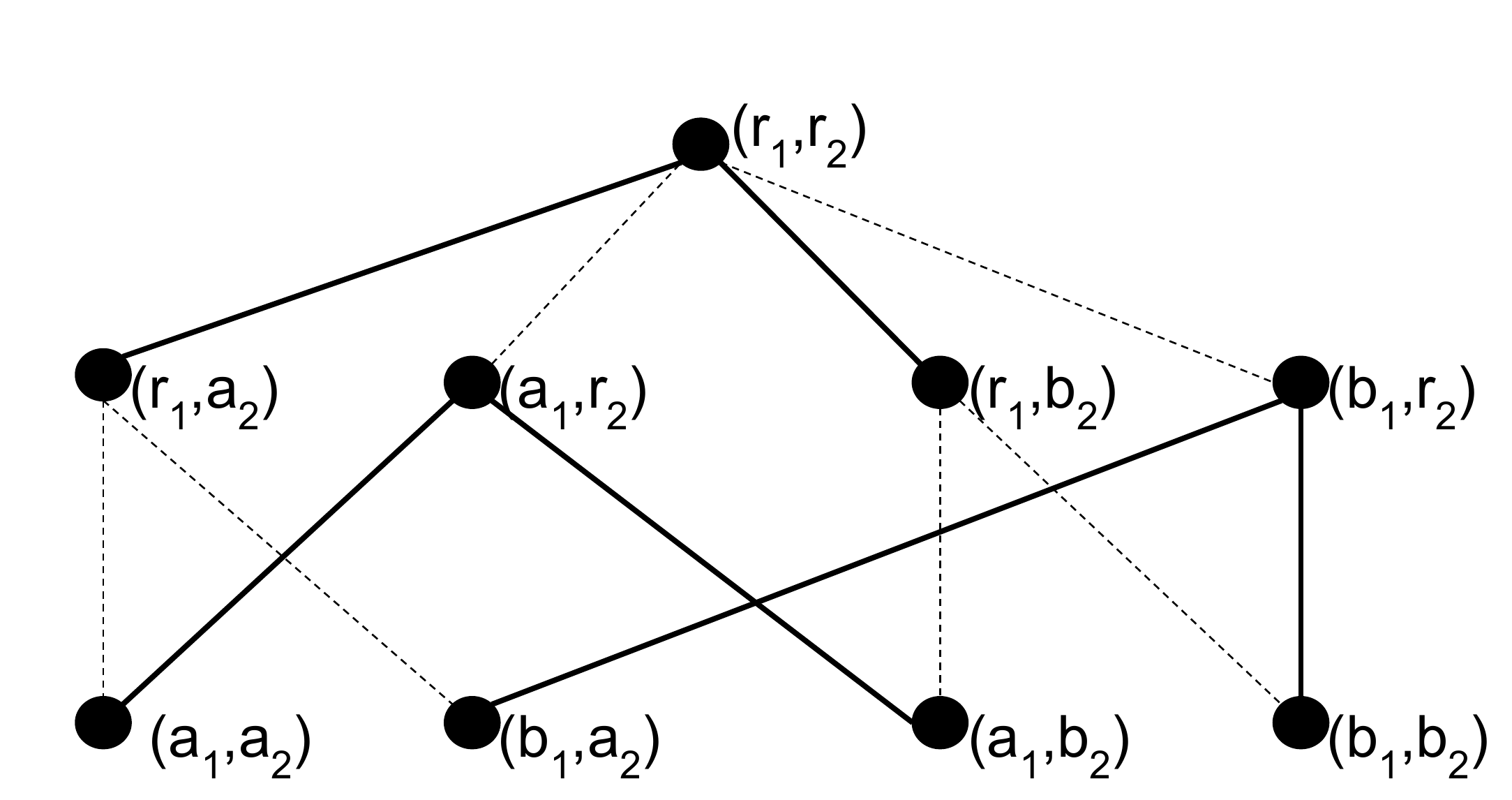}
  \end{center}
  \caption{The cartesian product of two trees $T_1$ and $T_2$, each of depth 2. Each node is a pair of tree nodes.
    The solid lines are edges along $T_2$ and the dotted lines are edges along $T_1$.}
  \label{fig:product}
\end{figure}

\begin{example}
Figure~\ref{fig:product} depicts the cartesian product of two trees $T_1$ and $T_2$, each of
depth $2$. Tree $T_i$ has root $r_i$ and left and right children $a_i$ and
$b_i$. Solid lines depict edges between parents and children along
tree $T_2$, and dotted lines depict the same relationship along tree
$T_1$. As examples of our definitions, nodes $(r_1,b_2)$ and $(r_1,a_2)$
do not overlap, but the pair $(r_1,b_2)$ and $(a_1,r_2)$ does.
\end{example}

\subsection{Formal Problem Statement}

With these definitions we can now formally state our problem.\\

\noindent {\sc Summarize}

\noindent {\bf Input:} Trees $T_1, T_2, \dots, T_d$ with the set of
tree-nodes $V := T_1 \times T_2 \times \dots \times T_d$, a non-negative
weight function $w : V \to \mathbb{R}^+$, and a maximum output size $k$.

\noindent {\bf Output:} Subset $S \subseteq V$, such that $|S| \leq
k$, $S$ is overlap-free, and $w(S) := \sum_{s \in S} w(s)$ is maximal
under these restrictions.\\

The main parameters of this problem are the number of dimensions $d$,
and $n := |V|$, the size of the input set. We express hardness and
runtime results in terms of these two variables.

\subsection{Modeling Metric Changes}

Let us briefly discuss how our problem definition relates to the
summarization of metric changes. Assume that we are comparing metrics for two points in time,
and that the value of the metric is defined for every leaf node in
Figure~\ref{fig:product}. Via aggregation, this defines the metric
values for every internal node as well.

Let us define the weight
function $w$ to be the absolute value of the difference in the metric
values for the two time points. (We discuss refinements to this weight function in Section~\ref{sec:weight-function}.)

Why does this weight function result in effective summarization? Consider two patterns of metric changes.
\begin{enumerate}
\item Two children of some node change their metrics in the same direction. Then the node's weight is the sum of children's weights.
\item Two children of some node change their metrics in different directions. Then the node's weight is smaller in magnitude than one of the children, possibly both. 
\end{enumerate}

In the first case, it is better to pick the node rather than the children. In the second case, it is better to pick one or both children rather than the parent. Further, notice that nodes may have multiple parents in different dimensions. So a given input may exhibit both patterns simultaneously, and then it is advantageous to summarize along certain dimensions.

To make this concrete, consider the topology in Figure~\ref{fig:product}.
Suppose that $k=2$ in the above
definition.  First let us suppose that $(a_1,a_2)$ and $(b_1,a_2)$ both go up by
some quantity $x$, while $(a_1,b_2)$ and $(b_1,b_2)$ fall by the same
quantity $x$. In essence, the change is along
dimension $2$.

Notice that the pair $(a_1,a_2)$ and $(b_1,a_2)$ and the pair $(a_1,b_2)$ and $(b_1,b_2)$ follow pattern 1 from above.
Whereas the pair $(a_1,a_2)$ and $(a_1,b_2)$ and the pair $(b_1,a_2)$ and  $(b_1,b_2)$ follow pattern 2.
Computing the weights shows us that the optimal
solution is the overlap-free set consisting of the nodes $(r_1, a_2)$
and $(r_1,b_2)$, reflecting the change along dimension $2$. Each of these nodes has a weight of $2x$. The other two
internal nodes and the root all have a weight of $0$.

\section{The Cascading Analysts Algorithm}
\label{sec:algorithm}

As we will discuss in Section~\ref{sec:negative}, {\sc Summarize}
cannot be solved optimally in polynomial time unless P=NP.  Therefore
we will now attempt to identify a good approximation algorithm for
{\sc Summarize}.  In this section, we describe the ``Cascading
Analysts'' algorithm that achieves that goal.

\subsection{Conflicts}
\label{sec:conflicts}

Our algorithm will achieve an optimal solution for a
more restricted version of {\sc Summarize}, namely where the solution
is additionally required to be \emph{conflict-free}.

The presence of a conflict prevents a set of nodes from being recursively subdivided one dimension at a time, even though the set of nodes is possibly overlap-free.
This definition and the example that follows elaborate.

\begin{definition}
A {\bf conflict} is a set $C \subseteq V$ such that for every
dimension $i$ there is a $(c_1,\dots,c_d) \in C$ such that for all $(x_1,\dots,x_d) \in C$, $x_i$
is a descendant of $c_i$, or $x_i=c_i$.
\end{definition}

A conflict can be overlap-free. Here is the simplest example of a conflict that is also overlap-free.
\begin{example}
\label{ex:simple}
  Consider three trees $T_1, T_2, T_3$, each of height two, and each with two leaves. Tree $T_i$ has root $r_i$ and left and right children $a_i$ and $b_i$ respectively.
  The conflict is defined by the set of nodes consisting of $(r_1,b_2,a_3)$, $(a_1, r_2, b_3)$ and $(b_1,a_2,r_3)$.
\end{example}

Conflicts play a central role in both the positive and the negative results in our paper. For instance, our algorithm will find the optimal \emph{conflict-free} solution, i.e., no subset of the nodes output by our algorithm contain a conflict. But it will only be approximate for {\sc Summarize}: given the input in Example~\ref{ex:simple}, our algorithm will only output two of the three nodes, even though the three nodes do not overlap. But we will show that the {\sc Summarize} problem is NP-hard even over problem instances of three trees of height two like the one in Example~\ref{ex:simple}, except with many more leaves. On the other hand, the optimal conflict-free solution can be found by recursively subdividing the product space of trees as we will show next. Each substep of our algorithm resembles a standard drill-down that an analyst would perform.

\begin{figure}
  \begin{center}
    \includegraphics[width=3in]{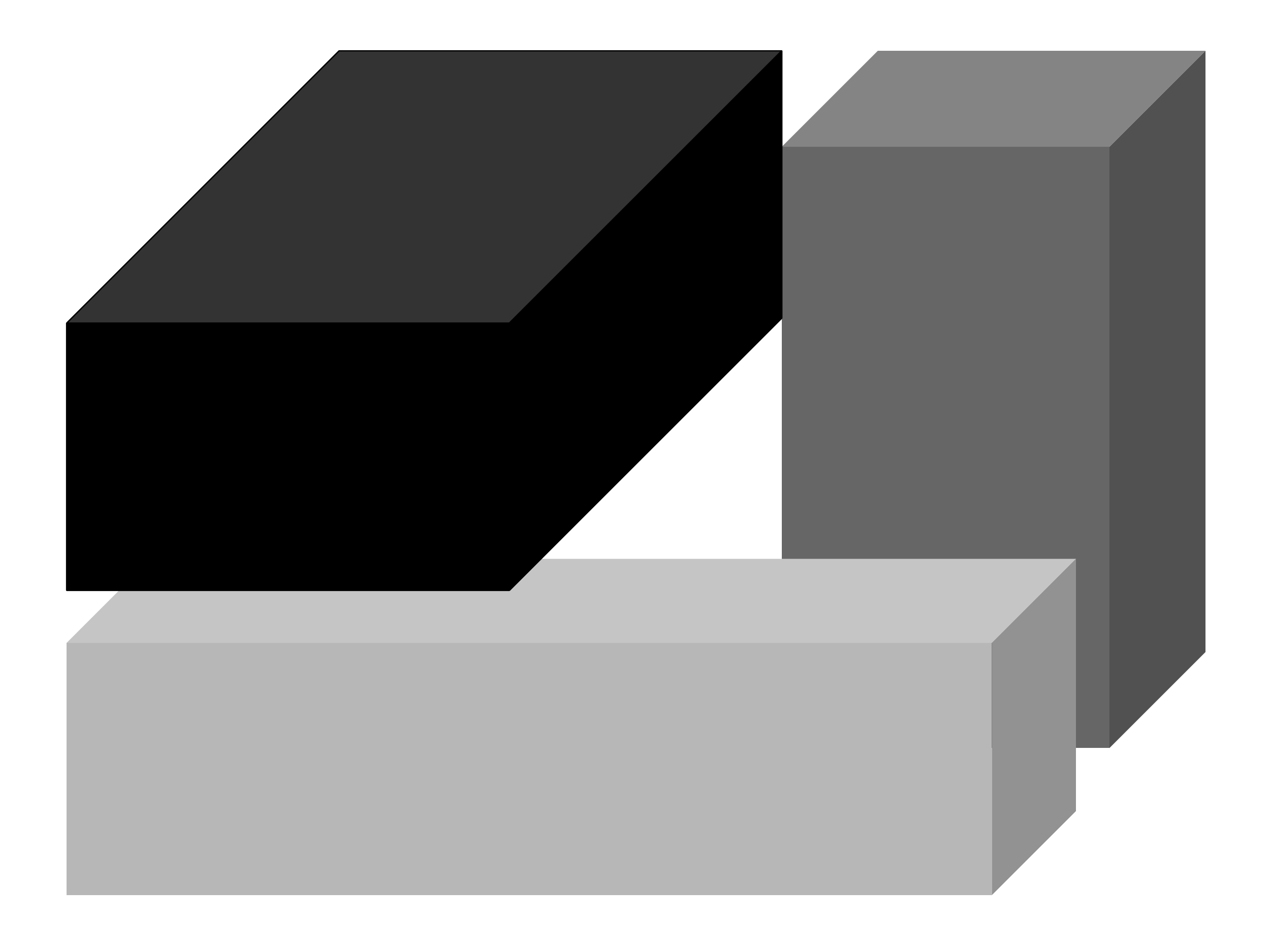}
  \end{center}
  \caption{Geometric representation of the conflict in Example~\ref{ex:simple}}
  \label{fig:conflict}
\end{figure}
The more visually inclined may benefit from a geometric view of a conflict. Figure~\ref{fig:conflict} depicts the conflict in Example~\ref{ex:simple}. The three trees in Example~\ref{ex:simple} correspond to the three dimensions in the figure. The three nodes correspond to the three cuboids in the figure. Note that for every dimension, there is a cuboid that spans it completely, i.e., every other cuboid ``overlaps'' with it in this dimension, or in the words of the the definition of a conflict, is a descendant of this cuboid in this dimension. This overlap prevents us from passing an axis aligned cutting plane through the space that would separate the cuboids without cutting at least one of them. This is the source of the approximation in our algorithm. Our algorithm will recursively perform axis aligned cuts and lose some of the nodes (cuboids) in the process. But as our hardness results in Section~\ref{sec:negative} show, these conflicts are exactly what make the problem inherently hard, so some of this loss is inevitable.

\subsection{Algorithm}
\label{sec:dp}

For each $v \in P$, our algorithm computes subsets $S(v,0)$, $S(v,1)$,
.., $S(v,k)$ such that for all $j \in \{ 0, \dots, k\}$: $S(v,j)
\subseteq Sub(v)$, $|S(v,j)| \leq j$, $S(v,j)$ is a conflict-free set,
and $S(v,j)$ is a maximum weight set satisfying these constraints.
The set $S(r,k)$ is the output of our algorithm, where we let $r := (r_1,
r_2, \dots, r_d)$ is the root of the product space. The sets $S(v,j)$
are computed bottom-up on the tree product space $P := T_1 \times T_2
\times \dots \times T_d$ via dynamic programming.\footnote{This dynamic program can either be implemented on a single machine, or distributed using MapReduce.}.\\

\textbf{1. Base case:} If $v$ is a leaf, i.e. $Sub(v) = \{ v \}$, then
we assign the sets in the ``obvious'' way: $S(v,0) := \emptyset$, and $S(v,j)
:= \{v\}$ for $j \geq 1$.\\

\textbf{2. Recursive step:} If $Sub(v) \neq \{ v \}$, we proceed as follows.
We let $S(v,0) := \emptyset$. For $j \in \{ 1, \ldots, k\}$, we set $S(v,j)$ as the maximum weight solution among these possibilities:
\begin{itemize}
\item The singleton set $\{ v \}$.
\item The $d$ or fewer solutions $S_i(v,j)$ that stem from repeating steps 2a and 2b below for those dimensions $i$ along which $v$ has children, i.e., $v_i$ is not a leaf of $T_i$.
 \end{itemize}

\textbf{2a. Breakdown along a dimension $i$:} Let $C_i(v)$ be the set of children of $v$ in dimension $i$. So if $v
= (v_1, v_2, \dots, v_d)$, then: $$C_i(v) = \{ (v_1, \dots, v_{i-1}, c,
v_{i+1}, \dots, v_d)\ |\ c\ \textrm{is child
  of}\ v_i\ \textrm{in}\ T_i \}$$

(This is the typical breakdown of the
space that an ``analyst'' interested in dimension $i$ would consider. The
algorithm performs a recursive sequence of these, hence the name ``Cascading
Analysts''.)

We let the maximal solution $S_i(v,j)$ along dimension $i$ be the largest weight union of sets of its children
$S(c_\ell, j_\ell)$ where $c_\ell \in C_i(v)$ (all $c_\ell$ distinct) and $\sum
j_\ell \leq j$. Note that the number of sets $S(c_\ell, j_\ell)$ can be anything
from $0$ to $j$. This step can be accomplished using a simple dynamic
program (not to be confused by the dynamic program over the tree structure) over the children. Here are the details:

\textbf{2b. Combining child solutions:}
This simple dynamic program orders the children in $C_i(v)$ in a sequence. Let $C_i^m(v)$ be the first $m$ nodes in $C_i(v)$ and $c^m$ be the $m$-th child in this sequence. Mirroring the definition of $S(v,j)$, let $S(C_i^m(v),j)$ be the optimal conflict-free solution of cardinality at most $j$ in the subspace of the first $m$ children of node $v$. The base case is when $m=1$; here we set $S(C_i^m(v),j)$ = $S(c^1,j)$. The recursive step is that we compute $S(C_i^m(v),j)$ by selecting the best union of $S(C_i^{m-1}(v),p)$ and $S(c^m,q)$, where $p+q \leq j$. Then, the optimal solution along dimension $i$ is defined by $S(C_i^\ell(v),j)$, where $\ell=|C_i(v)|$.

\begin{lemma}
\label{lem:conflict-free}
The Cascading Analysts algorithm will output a conflict-free set.
\end{lemma}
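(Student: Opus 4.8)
The plan is to prove a stronger statement by structural induction, bottom-up on the product space $P$: \emph{every} set the algorithm ever writes down --- each $S(v,j)$ of the main dynamic program \emph{and} each intermediate set $S(C_i^m(v),j)$ of the sub-dynamic program in step~2b --- is conflict-free. Since the output $S(r,k)$ is one of these sets, this yields the lemma. I would organize the induction on $\abs{Sub(v)}$; for a fixed $v$ the argument can handle all budgets $j$ at once, since the only sets the computation of $S(v,j)$ refers to come from strictly smaller subproblems $S(c_\ell,\cdot)$ with $c_\ell$ a child of $v$ in some dimension.

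The base case ($v$ a leaf) is immediate: the algorithm produces only $\emptyset$ and the singleton $\{v\}$, and a set of size at most one is conflict-free (a conflict involves at least two nodes; otherwise the phrase ``contains no conflict'' would be vacuous). For the inductive step, observe that by construction $S(v,j)$ is either $\emptyset$, the singleton $\{v\}$, or some $S_i(v,j)$; the first two are handled as in the base case. The remaining case, and the heart of the proof, reduces to a single \textbf{combining lemma}: if $c_1,\dots,c_m$ are nodes whose $i$-th coordinates are \emph{pairwise distinct} children of a common node, and each $A_\ell \subseteq Sub(c_\ell)$ is conflict-free, then $\bigcup_\ell A_\ell$ is conflict-free. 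This suffices because, by steps~2a--2b, both $S_i(v,j)$ and every intermediate $S(C_i^m(v),j)$ are --- after unrolling the prefix recursion --- unions of exactly this shape, with the $A_\ell = S(c_\ell,j_\ell)$ conflict-free by the induction hypothesis (and any $A_\ell = \emptyset$ harmless).

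To prove the combining lemma I would argue by contradiction: suppose $C \subseteq \bigcup_\ell A_\ell$ with $\abs C \geq 2$ is a conflict. Instantiate the conflict condition at dimension $i$: there is $c^\ast \in C$ whose $i$-th coordinate is an ancestor of, or equal to, the $i$-th coordinate of every node of $C$. The key geometric point is that the subtrees of $T_i$ rooted at the distinct children $\kappa_1,\dots,\kappa_m$ (the $i$-th coordinates of $c_1,\dots,c_m$) are pairwise disjoint, while every node of $Sub(c_\ell)$ has its $i$-th coordinate inside the subtree rooted at $\kappa_\ell$. If $c^\ast \in A_{\ell^\ast}$ then $c^\ast_i$ lies in the subtree of $\kappa_{\ell^\ast}$, hence so does $x_i$ for every $x \in C$; but $x \in A_{\ell'}$ forces $x_i$ into the subtree of $\kappa_{\ell'}$, so disjointness gives $\ell' = \ell^\ast$. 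Thus $C \subseteq A_{\ell^\ast}$, contradicting that $A_{\ell^\ast}$ is conflict-free.

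I expect the only real obstacle to be isolating and stating the combining lemma in the right generality --- in particular, realizing that the invariant must cover not only the ``headline'' sets $S(v,j)$ but also the intermediate prefix-unions $S(C_i^m(v),j)$, and that both have the same union-over-disjoint-children form. Once that is in place, the observation that a single-dimension breakdown confines any potential conflict entirely to one child's subspace does all the work, and the remaining case analysis is routine.
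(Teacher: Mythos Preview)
Your proposal is correct and follows essentially the same approach as the paper: a bottom-up induction whose key step shows that any conflict in a union across children along dimension $i$ would, via its dimension-$i$ ``ancestor'' witness, have to lie entirely in one child's subspace, contradicting the inductive hypothesis. You are slightly more explicit than the paper in also covering the intermediate prefix-unions $S(C_i^m(v),j)$ and in isolating the combining lemma, but the core argument is identical.
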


\begin{proof}
  We prove this claim bottom-up, mirroring the structure of the algorithm. For the base case, when $Sub(v) = \{ v \}$, $S(v,j)$ is either an empty set or a singleton set; both are conflict-free.
  When $Sub(v) \neq \{ v \}$, if $S(v,j) =\{v\}$ then it is conflict-free by itself. Otherwise, there is a dimension $i$ such that $S(v,j)$ is the union of sets each of which is contained within the subspace of a child $c \in C_i(v)$.

  Suppose that there is a conflict $Q$ in the union. Clearly $Q$ cannot be contained entirely with the subspace of any child, because inductively, these sets are each conflict-free. So $Q$ must span the subspace of at least two distinct children. But then $Q$ cannot be conflict-free because there is no node $q \in Q$ that is an ancestor along dimension $i$ to nodes in the subspace of both these children, violating the condition from the definition of conflicts. So we have a contradiction.
\end{proof}

\begin{lemma}
\label{lem:optimal-conflict-free}
The Cascading Analysts algorithm will output a maximum weight
overlap-free, conflict-free set.
\end{lemma}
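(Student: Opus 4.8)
The plan is to prove, by bottom-up induction over the product space $P$, the stronger invariant that for every node $v$ and every $j \in \{0,\dots,k\}$ the set $S(v,j)$ is a \emph{maximum-weight} overlap-free, conflict-free subset of $Sub(v)$ of cardinality at most $j$; instantiating this at $v=r$, $j=k$ gives the lemma. Two ingredients are needed: a structural decomposition lemma describing how an optimal conflict-free solution inside $Sub(v)$ must look, and the observation that the recursive step of the algorithm enumerates exactly those possibilities.

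The structural lemma I would establish first is: if $C \subseteq Sub(v)$ is nonempty, conflict-free, and $v \notin C$, then there is a dimension $i$ along which $v_i$ is not a leaf and every node of $C$ lies in the subspace of a (unique) child of $v$ along dimension $i$; hence $C$ decomposes as the disjoint union $\bigsqcup_{c \in C_i(v)} (C \cap Sub(c))$. The key step: since $C$ is conflict-free, $C$ itself is not a conflict, so negating the definition gives a dimension $i$ such that no single node of $C$ is a $T_i$-ancestor-or-equal of the $i$-th coordinates of all nodes of $C$; but because $C \subseteq Sub(v)$, a node of $C$ whose $i$-th coordinate equalled $v_i$ would be exactly such a dominating node, so no node of $C$ has $i$-th coordinate $v_i$, i.e. each has $i$-th coordinate strictly inside the subtree of one child of $v_i$. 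I would also record the converse, needed for soundness: a union of overlap-free, conflict-free sets drawn from the subspaces of distinct children of $v$ along a fixed dimension is again overlap-free (nodes in different child subspaces are non-overlapping along that dimension) and conflict-free (this is precisely the argument in the proof of Lemma~\ref{lem:conflict-free}). Together with the leaf and singleton base cases, this shows that every candidate the algorithm considers for $S(v,j)$ — namely $\emptyset$, $\{v\}$, and the dimension-wise recombinations $S_i(v,j)$ — is itself a valid (overlap-free, conflict-free) subset of $Sub(v)$ of size at most $j$, so no candidate can overshoot the optimum.

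With these in hand the induction is routine. The base case ($v$ a leaf) is immediate since $Sub(v)=\{v\}$ and $w \ge 0$. For the inductive step with $j \ge 1$, let $OPT$ be an optimal overlap-free, conflict-free subset of $Sub(v)$ with $|OPT| \le j$. If $v \in OPT$, then since every node of $Sub(v)$ overlaps $v$ and $OPT$ is overlap-free we must have $OPT=\{v\}$, which is one of the algorithm's candidates. If $v \notin OPT$ and $OPT \ne \emptyset$, the structural lemma yields a dimension $i$ and a decomposition $OPT = \bigsqcup_{c \in C_i(v)} OPT_c$ with $OPT_c := OPT \cap Sub(c)$; each $OPT_c$ is overlap-free, conflict-free, lies in $Sub(c)$, and has size at most $j$, so by the inductive hypothesis $w(OPT_c) \le w(S(c,|OPT_c|))$. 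Since $\sum_c |OPT_c| = |OPT| \le j$, the inner dynamic program of step 2b (whose correctness I would verify by an easy induction on the number of children processed) returns a union of child solutions of total size at most $j$ and total weight at least $\sum_c w(S(c,|OPT_c|)) \ge w(OPT)$; this union is $S_i(v,j)$, a candidate for $S(v,j)$. The case $OPT=\emptyset$ is trivial. In every case the algorithm has a valid candidate of weight at least $w(OPT)$, and by soundness no candidate exceeds $w(OPT)$, so $w(S(v,j)) = w(OPT)$ and $S(v,j)$ attains it.

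I expect the main obstacle to be the structural decomposition lemma: correctly extracting, from the single negated instance of the conflict condition, a dimension along which the whole set separates into child subspaces, and checking that the decomposition is genuinely disjoint and passes both overlap-freeness and conflict-freeness down and back up the recursion. The inner-DP correctness and the $v \in OPT$ case become routine once the supporting observations — every node of $Sub(v)$ overlaps $v$, and nodes in distinct children's subspaces are non-overlapping along the split dimension — are in place.
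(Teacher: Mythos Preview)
Your proposal is correct and follows essentially the same approach as the paper: a bottom-up induction showing that each $S(v,j)$ attains the maximum weight among overlap-free, conflict-free subsets of $Sub(v)$ of size at most $j$, driven by the same structural observation that a conflict-free $C\subseteq Sub(v)$ with $v\notin C$ must, for some dimension $i$, have every element's $i$-th coordinate strictly below $v_i$ and hence decompose across $C_i(v)$. You are more explicit than the paper about the case $v\in OPT$, the soundness direction, and the inner-DP correctness, but the core argument is identical.
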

\begin{proof}
We first show that the $S(v,j)$ are
overlap-free. This can be proved inductively. It is obviously true for
leaves of the product space (where $Sub(v) = \{v\}$). When combining
$S(c_\ell, j_\ell)$ from different children $c_\ell$ note that for elements of
$Sub(c_\ell)$ and $Sub(c_{\ell'})$ have empty intersection in the dimension
that we split on ($i$ in the above description). Thus, their union is
overlap-free.

Let $S_C$ be a maximum weight, overlap-free, conflict-free solution
for an instance of {\sc Summarize}. We will show that the weight of
the output of the Cascading Analysts algorithm is at least $w(S_C)$. Since the cascading analysts
algorithm outputs a conflict-free set (Lemma~\ref{lem:conflict-free}), this proves the lemma.

We show the following by induction: for all $v,j$, the weight of $S(v,j)$ is at least the
weight of the maximum weight conflict-free subset of size $j$ of
$Sub(v)$. Clearly this is true if $v$ is a leaf, so we just need
to consider the induction step.

Suppose $w(S(v,j)) < w(C)$, where $|C|=j$ and $C \subseteq Sub(v)$ is a
conflict-free set. Since $C$ is conflict-free, there has to be a dimension
$d'$ such that $\forall w \in C: w_{d'} \neq v_{d'}$. Let $D$ be the
children of $v$ in dimension $d'$. Then there are children $c_1,
\dots, c_\ell \in D$ so that $C = C_1 \cup \dots \cup C_\ell$ where $C_i
\subseteq Sub(c_i)$. We know that $w(S(c_i, |C_i|)) \geq w(C_i)$ by
induction hypothesis. The combination of these sets will be considered
by the algorithm, thus $w(S(v,j)) \geq \sum_i w(S(c_i, |C_i|)) \geq
w(C_i) = w(C)$, contradicting our assumption that the lemma did not
hold.
\end{proof}

\subsection{Running Time Analysis}
\label{sec:runtime}

There are $n$ choices of $v$ and $k$ choices of $j$ for which we need to compute the $S(v,j)$. With
standard techniques such as memoization, each of these needs to be touched
only once. For a fixed $v$, and fixed $j$, we can combine the child solutions to form $S(v,j)$ by a linear pass over the children (as in
step 2a of the algorithm in Section~\ref{sec:dp}). Each step in this pass incorporates an additional child and takes time $O(j)$.
Since a child can have at most $d$ parents, the total cost of this step for a single node is $O(dj)$.
Noting that $j \leq k$, this gives us a total runtime of $O(ndk^2)$.

\begin{remark}
Note that the size of the input grows multiplicatively in the number
of trees. For example, if each tree has size $10$, then $|V|$ is
$10^d$. Even reading the input becomes impractical for $d >
10$. Fortunately for us, there are compelling practical applications
where $d$ is fairly small, i.e. $d \leq 5$. This is true of our applications
in Section~\ref{sec:application}.
\end{remark}

\section{Performance Guarantee}

We first show that our algorithm is optimal for two dimensions,
followed by an approximation guarantee of $(\lceil
\log_{2}(n+1)\rceil)^{d-2}$ for the
case of three or more dimensions.

\begin{thm}
  \label{thm:twotree}
  The Cascading Analysts algorithm solves {\sc Summarize} optimally when $d=2$.
\end{thm}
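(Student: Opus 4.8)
The plan is to reduce the statement to Lemma~\ref{lem:optimal-conflict-free}, which already guarantees that the Cascading Analysts algorithm outputs a maximum-weight set that is both overlap-free and conflict-free. Hence it suffices to show that when $d=2$ the conflict-free requirement is vacuous among overlap-free sets: every overlap-free set of nodes is automatically conflict-free (i.e.\ contains no conflict of size at least two). Given that, the maximum-weight overlap-free conflict-free set has exactly the same weight as the maximum-weight overlap-free set, which is precisely an optimal solution to {\sc Summarize}, so the theorem follows.

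The heart of the argument is therefore a structural observation about conflicts in two dimensions. Suppose $C$ is a conflict with $|C|\ge 2$. Applying the definition of a conflict to dimension $1$, there is a node $c=(c_1,c_2)\in C$ whose first coordinate $c_1$ is an ancestor of, or equal to, the first coordinate of every node of $C$. Applying it to dimension $2$, there is a node $c'=(c'_1,c'_2)\in C$ whose second coordinate $c'_2$ is an ancestor of, or equal to, the second coordinate of every node of $C$. I will argue that these two witnesses already overlap, so that $C$ is not overlap-free. If $c\neq c'$, then $c_1$ is an ancestor of, or equal to, $c'_1$ in $T_1$ (take $x=c'$ in the dimension-$1$ condition) and $c'_2$ is an ancestor of, or equal to, $c_2$ in $T_2$; thus $c$ and $c'$ overlap in every dimension, hence overlap. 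If $c=c'$, then this single node dominates every node of $C$ in \emph{both} coordinates, so it overlaps every other node of $C$; since $|C|\ge 2$, such a node exists, and we again obtain an overlapping pair inside $C$.

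Combining, when $d=2$ no overlap-free set can contain a conflict of size at least two, so restricting {\sc Summarize} to conflict-free solutions does not change the optimum; Lemma~\ref{lem:optimal-conflict-free} then finishes the proof. The only real subtlety is the careful unpacking of the conflict definition in two dimensions together with the small case split on whether the two per-dimension witnesses coincide, so I do not expect a genuine obstacle here. One should also keep in mind the convention used implicitly in the proof of Lemma~\ref{lem:conflict-free}, that ``conflict-free'' only forbids conflicts of size at least two, with singletons and the empty set being trivially fine.
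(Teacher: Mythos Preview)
Your proposal is correct and follows essentially the same approach as the paper: reduce to Lemma~\ref{lem:optimal-conflict-free} by showing that in two dimensions every conflict already contains an overlapping pair, via the same case split on whether the dimension-$1$ and dimension-$2$ witnesses coincide. The paper's proof is exactly this argument, stated slightly more tersely.
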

\begin{proof}
  We show that when $d=2$, every overlap-free solution is also
  conflict-free. With Lemma~\ref{lem:optimal-conflict-free}, this
  concludes the proof.

  All we have to argue is that if a set of nodes $C$ constitutes a
  conflict, then it also contains an overlap. If $C$ is a conflict,
  then there are two nodes $x,y \in C$, $x= (x_1,x_2)$, $y
  =(y_1,y_2)$, such that for all $(c_1,c_2) \in C$, $x_1$ is an
  ancestor of $c_1$ and $y_2$ is an ancestor of $c_2$. If $x = y$ then
  this node overlaps with all other nodes in $C$, completing the
  proof. If $x \neq y$, then (as just stated) $x_1$ is an ancestor of
  $y_1$ and $y_2$ is an ancestor of $x_2$. Therefore $x$ and $y$
  overlap, completing the proof.
\end{proof}

\begin{thm}
\label{thm:log}
For $d \geq 2$, let $d$ trees $T_1,\dots,T_d$ have sizes $n_1, \dots,
n_d$ respectively. The Cascading Analysts algorithm is a $(\lceil
\log_{2}(m+1)\rceil)^{d-2}$-approximation algorithm for such an
instance of the {\sc Summarize} problem, where $m=\max_i n_i$.

Alternately, The Cascading Analysts algorithm is a $(\lceil
\log_{2}(n+1)\rceil)^{d-2}$-approximation algorithm.
\end{thm}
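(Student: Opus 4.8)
The plan is to reduce the statement to a purely combinatorial fact about overlap-free sets and then average. By Lemma~\ref{lem:optimal-conflict-free} the algorithm returns a maximum-weight set that is simultaneously overlap-free and conflict-free, so it suffices to find, inside an optimal overlap-free solution $\mathrm{OPT}$ (of size $\le k$), an overlap-free and conflict-free subset of weight at least $w(\mathrm{OPT})/(\lceil\log_2(m+1)\rceil)^{d-2}$. I would get such a subset by proving: \emph{every} overlap-free set $S$ can be partitioned into at most $(\lceil\log_2(m+1)\rceil)^{d-2}$ conflict-free classes; then the heaviest class works, being a subset of $S$ (hence overlap-free and of size $\le k$) and conflict-free. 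Taking $S=\mathrm{OPT}$ finishes the $m$-version, and the $n$-version follows since $m=\max_i n_i\le\prod_i n_i=n$.

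To build the partition I would first decompose each tree separately using a heavy-path (recursive) decomposition: peel off the heavy root-to-leaf path, recurse on the pairwise-disjoint subtrees hanging off it, and call the path peeled at recursion depth $\ell$ a \emph{level-$\ell$ path}. A subtree entering depth $\ell$ has size at most half (minus one) of a subtree entering depth $\ell-1$, so the number of levels obeys $f(s)\le 1+f(\lfloor(s-1)/2\rfloor)$ with $f(0)=0$, which solves to $f(s)\le\lceil\log_2(s+1)\rceil$; thus $T_i$ is partitioned into at most $\lceil\log_2(n_i+1)\rceil$ levels, each a disjoint union of paths coming from pairwise-disjoint subtrees. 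Now partition an overlap-free $S$ by the \emph{level-tuple} $(\ell_3,\dots,\ell_d)$, where $\ell_i$ is the level of $x_i$ in $T_i$; the number of non-empty classes $S_{\ell_3,\dots,\ell_d}$ is at most $\prod_{i=3}^d\lceil\log_2(n_i+1)\rceil\le(\lceil\log_2(m+1)\rceil)^{d-2}$. Dimensions $1$ and $2$ are deliberately left untouched — that is where the two-dimensional exactness gets spent.

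It remains to show each class $S_{\vec\ell}$ is conflict-free, by a case split. If for some $i\in\{3,\dots,d\}$ the $i$-th coordinates of $S_{\vec\ell}$ use two \emph{distinct} paths of level $\ell_i$, those two paths live in disjoint subtrees with incomparable roots, so no node of one is an ancestor-or-equal of a node of the other; hence no element of $S_{\vec\ell}$ can dominate all $i$-th coordinates, and dimension $i$ certifies that $S_{\vec\ell}$ is not a conflict. Otherwise, for every $i\in\{3,\dots,d\}$ all $i$-th coordinates of $S_{\vec\ell}$ lie on one path and therefore form a chain; then the projection $x\mapsto(x_1,x_2)$ sends $S_{\vec\ell}$ to an overlap-free subset of $T_1\times T_2$ (any two elements of $S_{\vec\ell}$ were separated along some dimension, which by assumption cannot have been one of $3,\dots,d$), and the set of first coordinates (resp.\ second coordinates) is unchanged by projecting, so the implication ``overlap-free $\Rightarrow$ conflict-free in two dimensions'' from the proof of Theorem~\ref{thm:twotree} yields a certifying dimension in $\{1,2\}$. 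Either way $S_{\vec\ell}$ is conflict-free, and averaging over classes completes the proof.

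The main obstacle I anticipate is the tree-decomposition step: getting the count down to \emph{exactly} $\lceil\log_2(n_i+1)\rceil$ rather than a loose $O(\log n_i)$ requires the right choice of peeled path (the heavy one) together with a careful size-shrinkage analysis, including the small-subtree boundary cases, and a clean strong-induction argument that $1+\lceil\log_2(\lfloor(s-1)/2\rfloor+1)\rceil\le\lceil\log_2(s+1)\rceil$. The conflict-freeness case analysis is then essentially bookkeeping, once one uses the characterization that $S_{\vec\ell}$ fails to be a conflict along dimension $i$ precisely when no member's $i$-th coordinate is an ancestor-or-equal of all the others'.
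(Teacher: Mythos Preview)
Your approach is correct and a genuine alternative to the paper's proof, though one step needs tightening.

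\textbf{Comparison.} The paper also reduces to a logarithmic path decomposition of a tree (their Lemma~\ref{lem:decomposition}), but organizes the argument by induction on $d$: decompose \emph{one} tree $T_1$ into at most $\lceil\log_2(m+1)\rceil$ groups of non-overlapping paths, pigeonhole the optimum into the heaviest group, and then collapse each path $p$ to a single node via the reweighting $w'(t_2,\dots,t_d)=\max_{v\in p}w(v,t_2,\dots,t_d)$, reducing the dimension by one and recursing. You instead decompose $T_3,\dots,T_d$ simultaneously, bucket the optimum by a $(d-2)$-tuple of levels, and argue directly that each bucket is conflict-free, spending the two-dimensional exactness on the two untouched trees. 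Your route is more direct (no reweighting, no induction on $d$) and makes the role of the two ``free'' dimensions explicit; the paper's route hides the conflict-freeness check inside the inductive hypothesis. The tree decomposition itself also differs: you use heavy-path decomposition, while the paper peels the root-to-middle-leaf path and recurses on the two resulting forests; both achieve the exact $\lceil\log_2(\cdot+1)\rceil$ count, and your recurrence analysis for the heavy-path version is fine.

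\textbf{One gap to close.} In this paper, \emph{conflict-free} means that no \emph{subset} is a conflict (see the proof of Lemma~\ref{lem:conflict-free}), not merely that the set itself fails to be one. Your case~1, as written, only certifies that $S_{\vec\ell}$ is not a conflict: a proper subset $Q\subseteq S_{\vec\ell}$ may touch only one level-$\ell_i$ path in every dimension $i\ge 3$, and then your case~1 witness disappears for $Q$. The fix is immediate and uses exactly your ingredients: run the case split on each subset $Q$ rather than on $S_{\vec\ell}$. Either some $i\ge 3$ has $Q$'s $i$-th coordinates spread over two distinct level-$\ell_i$ paths (so no element of $Q$ dominates dimension $i$, and $Q$ is not a conflict), or for every $i\ge 3$ the $i$-th coordinates of $Q$ lie on a single chain, in which case your case~2 projection argument applies verbatim to $Q$ (separation must occur in dimension $1$ or $2$, the projection to $T_1\times T_2$ is overlap-free, and Theorem~\ref{thm:twotree} supplies the certifying dimension). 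With this adjustment the proof is complete.
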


\begin{proof}
The second theorem statement is easily implied by the first because $m \leq n$, so we now prove the first statement.
Our proof is by induction over $d$. The base case for $d=2$ follows
from Theorem~\ref{thm:twotree}, so for the following assume $d>2$.

Given a tree, let an (ordered) {\em path} be a sequence of nodes
$v_1,\dots,v_h$ in the tree where $v_{i+1}$ is a child of $v_i$ for
each $i$.  We say that two paths $p_1, p_2$ {\em overlap} if some node $v_1\in
p_1$ overlaps with some node $v_2\in p_2$. The following combinatorial
lemma is fundamental to our proof.

\begin{lemma}
\label{lem:decomposition}
For every rooted tree with $\ell$ leaves, there exists a partition of
its nodes into $\lceil log_2(\ell+1)\rceil$ groups, such that each
group is a set of paths, and no two paths in a group overlap.
\end{lemma}

We defer the proof of Lemma~\ref{lem:decomposition}, and first use it
to finish the proof of Theorem~\ref{thm:log}. 

For a {\sc Summarize} instance $P$, let $Opt(P)$ denote its optimal
solution weight. Let $\beta := \lceil \log_{2}(m+1)\rceil$.  Using
Lemma~\ref{lem:decomposition}, we can decompose $T_1$ into disjoint
groups $T = G_1 \cup \dots \cup G_g$, where $g \leq \lceil \log_2 (n_1+1) \rceil \leq \beta$.

Let $P_{G_i}$ to denote the {\sc Summarize} problem restricted to $G_i
\times T_2 \times \cdots \times T_d$.  Then we have $\sum_i
Opt(P_{G_i}) \geq Opt(P)$. Wlog assume that $Opt(P_{G_1})$ has the
largest weight among the subproblems, and therefore $Opt(P_{G_1}) \geq
Opt(P)/g \geq Opt(P)/\beta$.

Recall that $G_1$ is a set of non-overlapping paths.  For each such
path $p$ in $G_1$, consider the {\sc Summarize} problem $P_p$ over $p
\times T_2 \times \cdots \times T_d$. Then we have $\sum_{p\in G_1}
Opt(P_p) \geq Opt(P_{G_1})$.

We remove the first dimension from $P_p$ to form a problem $P^{'}_p$
over $T_2 \times \cdots \times T_d$, by setting
$w'(t_2,\dots,t_d)$ to be $\max_{v\in p} w(v,t_2,\dots,t_d)$ for all
$(t_2,\dots,t_d)\in T_2\times\cdots\times T_d$.  Note that
$Opt(P'_p)=Opt(P_p)$, and a conflict-free solution for $P^{'}_{p}$ can
be mapped back to a conflict-free solution for $P_p$ with the same
weight, by replacing each $(t_2, \dots, t_d)$ by $(v, t_2, \dots,
t_d)$ where $v=\argmax_{v\in p} w(v,t_2,\dots,t_d)$.

Note that $P^{'}_p$ has $d-1$ dimensions. By inductive hypothesis, it
has a conflict-free solution $S'_p$ with $w'(S^{'}_p)\geq
Opt(P^{'}_p)/\beta^{d-3}$.  The corresponding conflict-free solution
$S_p$ for $P_p$ also satisfies $w(S_p)\geq Opt(P_p)/\beta^{d-3}$.
Since the $p$ in $G_1$ are non-overlapping, the union of solutions $S := \Cup_{p} S_p$ must be a conflict-free solution for $P_{G_1}$.

Combining our insights leads to $w(S) = \sum_p w(S_p) \geq \sum_p Opt(P_p)/\beta^{d-3} \geq Opt(P_{G_1})/\beta^{d-3} \geq Opt(P) / (\beta\beta^{d-3})= Opt(P) / \beta^{d-2}$.
\end{proof}


It remains to show Lemma~\ref{lem:decomposition}. We will prove the following slightly stronger
generalization to forests instead. Forests are sets of rooted trees,
and nodes in different trees in a forest are considered to be not
overlapping.

\begin{lemma}
\label{lem:decomposition-forest}
For every forest with $\ell$ leaves, there exists a partition of its
nodes into exactly $\lceil log_2(\ell+1)\rceil$ groups, such that each
group consists of a set of paths, and no two paths in a group overlap.
\end{lemma}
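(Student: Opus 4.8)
The plan is to prove the lemma by strong induction on $\ell$, the whole argument resting on a single \emph{halving claim}: every forest with $\ell \ge 1$ leaves contains a legal group $G$ (a set of pairwise non-overlapping downward paths) whose deletion leaves a forest with at most $\lfloor \ell/2\rfloor$ leaves. Granting this, I would delete such a $G$, apply the induction hypothesis to the residual forest (which has $\le \lfloor \ell/2\rfloor < \ell$ leaves), and invoke the arithmetic identity $1 + \lceil \log_2(\lfloor \ell/2\rfloor + 1)\rceil = \lceil \log_2(\ell+1)\rceil$, valid for all $\ell \ge 1$, to conclude that at most $\lceil \log_2(\ell+1)\rceil$ groups are produced; when strictly fewer are produced, pad with empty groups (vacuously legal) to get exactly that many. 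The base case $\ell \le 1$ is immediate, since a forest with a single leaf is one root-to-leaf path and $\lceil\log_2 2\rceil = 1$.

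For the halving claim, since nodes lying in different trees of the forest never overlap, it suffices to exhibit in each individual tree $T$ (with $\ell_T$ leaves) a legal group $G_T$ whose removal leaves at most $\lfloor \ell_T/2\rfloor$ leaves; then $G := \bigcup_T G_T$ works, using $\sum_T \lfloor \ell_T/2\rfloor \le \lfloor (\sum_T \ell_T)/2\rfloor$. The group $G_T$ I would take is the \emph{maximal bottom-path cover}: put a node $v$ into $G_T$ exactly when the subtree rooted at $v$ is a single path (equivalently, every node at or below $v$ has at most one child). This $G_T$ is a disjoint union of downward paths, and these paths do not overlap one another: if a node $u \in G_T$ were an ancestor of some node $w$ on a different maximal $G_T$-path, then the whole subtree of $u$ is a path, forcing $w$ onto $u$'s own unique downward chain, hence onto $u$'s path -- a contradiction.

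It then remains to bound the number of leaves of $T \setminus G_T$. A node $v$ is a leaf of $T \setminus G_T$ iff $v \notin G_T$ but every child of $v$ in $T$ lies in $G_T$; a short case analysis (ruling out $v$ having $0$ or exactly $1$ child) shows that these are precisely the ``lowest branching nodes'' -- nodes with at least two children, each of whose subtrees is a path. Any two lowest branching nodes are incomparable (a lowest branching node inside the subtree of another would have to sit on a path yet still branch), so their subtrees are disjoint, and in particular the $\ge 2$ leaves each one contributes (at least one from each child's path-subtree) are pairwise disjoint across all of them. Hence there are at most $\lfloor \ell_T/2\rfloor$ lowest branching nodes, which is exactly the bound needed.

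The main obstacle I anticipate is identifying the right group to delete. The obvious candidates fail: deleting all leaves does nothing to a disjoint union of long paths, and deleting a heavy root-to-leaf path in each tree reduces the leaf count far too slowly on bushy trees. One has to hit on the bottom-path cover $G_T$, whose two key properties -- it is a legal group, and it ``absorbs'' at least two original leaves for each leaf that survives in $T \setminus G_T$ -- are precisely what make the induction close. Once the construction is in place, the remaining pieces (legality of $G_T$, the characterization of the surviving leaves, the floor-superadditivity step, and the $\lceil\log_2\rceil$ identity) are routine.
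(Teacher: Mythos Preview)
Your proof is correct, but it follows a genuinely different route from the paper's. The paper removes a \emph{single} root-to-leaf path---the one through the ``middle'' leaf in a preorder traversal---which cleaves the forest into two sub-forests, each with at most $\lfloor\ell/2\rfloor$ leaves; it then recurses on both halves and, because the two halves are separated by the removed path and hence do not overlap, unions the $i$-th group from each half to form the $i$-th output group, with the removed path becoming the final group. You instead remove an entire group at once (the maximal bottom-path cover), landing on a \emph{single} residual forest with at most $\lfloor\ell/2\rfloor$ leaves, and recurse linearly. The paper's argument is shorter to verify (identify one path, observe the split, merge), whereas yours trades that for a characterization of the surviving leaves as lowest branching nodes and an incomparability argument; in exchange you avoid the merge step and produce groups that are naturally stratified by distance from the leaves. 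Both approaches close on the same identity $1+\lceil\log_2(\lfloor\ell/2\rfloor+1)\rceil=\lceil\log_2(\ell+1)\rceil$.
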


\begin{proof} We prove Lemma~\ref{lem:decomposition-forest} by induction on $\ell$.
Given a forest, let $v_1,\dots,v_\ell$ be a preorder traversal
ordering of its leaves.  Let $m :=\lceil \ell/2 \rceil$ be the index
of the middle leaf, and let $p$ be the path from $v_m$ all the way to
its root.  If $p$ contains all nodes in the forest, we are done.
Otherwise consider the forest over $v_1,\dots,v_{m-1}$, and the forest
over $v_{m+1},\dots,v_\ell$ respectively.  The two forests are
separated by the path $p$. In particular, no node in the first forest
overlaps with any node in the second forest.  We inductively apply the
lemma to the two (smaller) forests respectively, and obtain $\lceil
\log_2(m-1 + 1)\rceil$ groups for the first forest, and $\lceil
\log_2(\ell-m + 1)\rceil$ groups for the second forest, both of which
have size at most $\lceil log_2(\ell+1) \rceil - 1$.  No group from
the first forest overlaps with any group from the second forest.
Hence we can set $G_i$ to be the union of the $i$-th group for the
first forest with the $i$-th group for the second forest, for $i =
1,\dots,\lceil log_2(\ell+1) \rceil - 1$, and no two paths in $G_i$
overlap. Finally, we finish our construction by setting $G_{\lceil
  log_2(\ell+1) \rceil} := \{p\}$.

\end{proof}

\begin{remark}
  \label{rem:height}
  An alternative construction for Lemma~\ref{lem:decomposition-forest}
but with a looser bound is by an induction over the height of the
forest, where at each step, we take one root-to-leaf path from each
root of the forest, to form a group, and then proceed with the rest of
the forest. Each such step reduces the height of the forest by 1, and
we end up having as many groups as the height of the forest. In
practical applications, the depths of the hierarchies are usually
bounded by a small constant such as 3, so that this construction gives
a better approximation bound.
\end{remark}

\begin{remark}
  \label{rem:practice}
  In our applications, the worst-case approximation ratio is usually a small constant ($2-4$). These applications have at most two large dimensions, which do not contribute to the approximation ratio via the proof of Theorem~\ref{thm:log}; notice that the proof allows to leave out any two dimensions from the bound, and we may as well leave out the two dimensions that correspond to the largest trees. Further, the other dimensions all have height at most $2$. So by Remark~\ref{rem:height}, we get an approximation factor that is $2^{d-2}$, where $d$ is the number of dimensions. For instance, in the AdWords use-case, we get an approximation ratio of $2$. In practice, the approximation could be even better due to the absence of conflicts. We discuss this in Remark~\ref{rem:conflicts}.
\end{remark}

\section{Hardness and Lower Bounds}
\label{sec:negative}
\subsection{Hardness results}

We have seen that our algorithm solves {\sc Summarize} exactly for $d
\leq 2$, and provided approximation guarantees for $d \geq 3$. The
following theorem shows that an exact solution even for $d = 3$ is
likely infeasible.


\begin{figure}
  \begin{center}
    \includegraphics[width=3in]{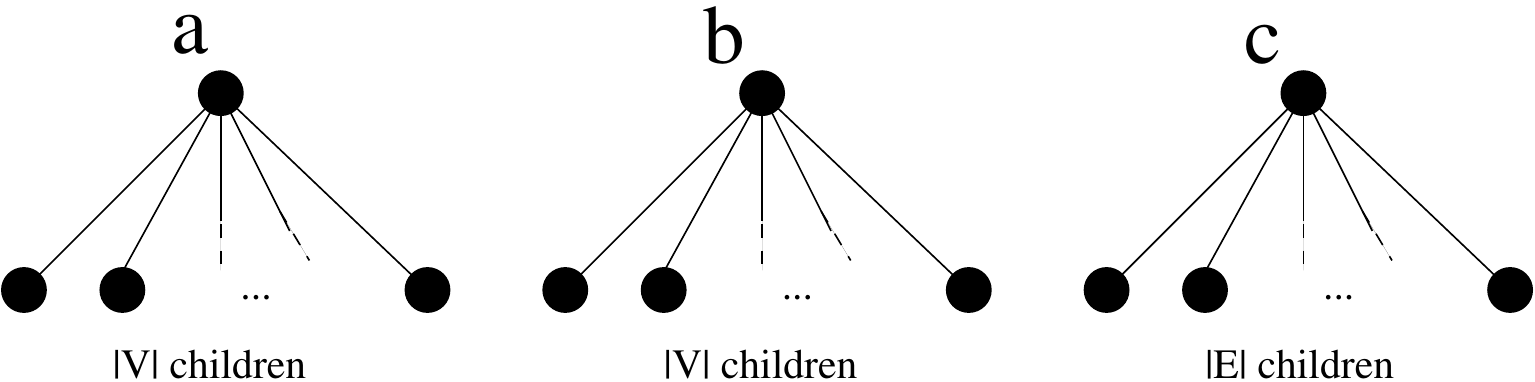}
  \end{center}
  \caption{A {\sc Summarize} problem instance with three trees of height two.}
  \label{fig:stars}
\end{figure}

\begin{thm}
\label{thm:ThreeTreeHardness}
{\sc Summarize} is NP-hard for $d = 3$.
\end{thm}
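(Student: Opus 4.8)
The plan is to give a polynomial-time reduction to the decision version of \textsc{Summarize} --- ``is there an overlap-free $S$ with $|S|\le k$ and $w(S)\ge W$?'' --- restricted to instances with $d=3$ where all three trees have height two, exactly the shape of Example~\ref{ex:simple} but with many more leaves. Since a height-two tree consists of a root together with its leaves, a node of the product picks in each dimension either the root or one leaf, and it is convenient to identify such a node with the combinatorial sub-box of the $3$-dimensional grid of leaves obtained by taking a single leaf in each ``leaf'' dimension and the whole leaf set in each ``root'' dimension. Under this identification two nodes overlap iff their boxes share a cell, so an overlap-free set is exactly a family of pairwise cell-disjoint boxes of this special form, and the theorem becomes: weighted packing of such boxes is NP-hard. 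The source of hardness will be the conflict of Section~\ref{sec:conflicts} --- three boxes that pairwise avoid one another yet cannot all be separated by a single axis-parallel cut.

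For the source problem I would take a combinatorial matching problem whose three-part structure maps onto the three trees; a natural candidate is \textsc{3-Dimensional Matching} (equivalently, maximum independent set in the ``shares an element'' graph of a $3$-partite $3$-uniform hypergraph). Given ground sets $A,B,C$ and a triple set $\mathcal{T}$, the trees $T_1,T_2,T_3$ receive one leaf per element of $A$, $B$, $C$ respectively, together with a controlled pool of extra ``padding'' leaves. Each triple $\tau=(a,b,c)$ then contributes a small gadget of nodes built around a conflict in the style of Example~\ref{ex:simple}, so that only a prescribed number of a triple's gadget nodes can appear together in any overlap-free set; the weights and the bound $k$ are chosen so that a maximum overlap-free set attains the threshold $W$ exactly when the chosen gadget nodes trace out a perfect matching. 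The heart of the construction is to lay out which leaves each gadget node uses so that a gadget node of $\tau$ overlaps a gadget node of $\tau'$ precisely when $\tau$ and $\tau'$ share an element of $A$, of $B$, or of $C$; this is where the freedom to add many leaves and to choose node ``types'' (root in zero, one, two, or all three dimensions) is spent, and the rigidity of overlap in height-two trees is exactly what makes this delicate.

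Correctness splits into the customary two directions. \emph{Soundness}: a perfect matching yields an overlap-free set of weight at least $W$ --- with the gadgets in place this should be a direct check. \emph{Completeness}: any overlap-free set of weight $\ge W$ must be ``cleanable'', without loss of weight, into one that visibly comes from a perfect matching --- and I expect this to be the main obstacle. The argument has to rule out solutions that cheat, for instance by using several gadget nodes of one triple, by exploiting padding leaves in unintended ways, or by combining gadget nodes of triples that in fact share an element, and must show that any such set can be rewritten into canonical form. This is precisely where conflicts do double duty: the conflict planted inside each triple gadget caps how many of its nodes co-occur, while the fact that a conflict cannot be split by axis-parallel cuts means the adversary cannot quietly ``decompose'' a mixed solution the way the Cascading Analysts algorithm would, so the rewriting carries real combinatorial content. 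Once the weights and padding counts are tuned so that this cleanup is forced, and the overlap relation on gadget nodes is checked to coincide with the shares-an-element relation, the reduction is evidently polynomial in $|A|+|B|+|C|+|\mathcal{T}|$, which finishes the proof.
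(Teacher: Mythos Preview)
What you have written is a plan, not a proof. The entire technical content of the reduction --- the gadget for a triple, the weights, the padding, and the cleanup argument --- is deferred with phrases like ``the heart of the construction is to lay out which leaves each gadget node uses'' and ``once the weights and padding counts are tuned''. That heart is precisely what needs to be supplied, and it is not clear it can be. In height-two trees, two product nodes overlap iff in \emph{every} dimension one of them is the root or both pick the same leaf. If your gadget for $\tau=(a,b,c)$ is a conflict in the style of Example~\ref{ex:simple} using $a,b,c$ together with private padding leaves, then a gadget node of $\tau$ and a gadget node of $\tau'=(a,b',c')$ will typically \emph{fail} to overlap (they disagree on a padding leaf in some dimension), so ``overlap of gadget nodes'' does not coincide with ``share an element''. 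Some further idea is needed to force cross-triple overlaps exactly when triples share a coordinate, and you have not provided one.

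For comparison, the paper's reduction is from \textsc{Maximum Independent Set} on a directed graph $(V,E)$, and the crucial design choice is asymmetric: trees $A$ and $B$ each get one leaf per vertex, while tree $C$ gets one leaf per \emph{edge}. Each vertex $v$ contributes a single node $N_v=(a_v,b_v,c)$ of weight $1$; each edge $(v,w)$ contributes two nodes $N_{v,w}=(a_v,b,c_{v,w})$ and $N'_{v,w}=(a,b_w,c_{v,w})$ of weight $1+\varepsilon$. The pair $N_{v,w},N'_{v,w}$ overlaps in dimension $C$, so at most one can be chosen per edge; $N_v$ blocks $N_{v,w}$ via dimension $A$, and $N_w$ blocks $N'_{v,w}$ via dimension $B$. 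Hence keeping both $N_v$ and $N_w$ forces the loss of an edge node's weight, and with $k=\infty$ one shows that a solution of weight at least $m+(1+\varepsilon)|E|$ exists iff there is an independent set of size $m$. No padding, no per-triple gadgets, and the completeness direction is a short exchange argument. Your \textsc{3DM} route may be salvageable, but as written it is missing the very step that carries the reduction.
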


\begin{proof}
We show this by reduction from (directed) {\sc Maximum Independent Set
  (MIS)}. Given a directed graph $(V, E)$ (an instance of {\sc MIS}),
we construct an instance of {\sc Summarize} with $d=3$ as follows. Let
us call the three trees $A, B, C$ with roots $a,b,c$,
respectively. All the trees have height 2. In trees $A$ and $B$ we
have one child per vertex $v \in V$, called $a_v$ and $b_v$,
respectively. In the third tree, we have one child per edge $(v,w) \in
E$, called $c_{v,w}$ (see Figure~\ref{fig:stars}).

The weight function $w$ has non-zero weight on the following nodes:
\begin{itemize}
\item node $N_v := (a_v, b_v, c)$ has weight 1 for every $v \in V$, and
\item nodes $N_{v,w} := (a_v, b, c_{v,w})$ and $N'_{v,w} := (a, b_w, c_{v,w})$ have weight $\beta := 1 + \varepsilon$ for every $(v,w) \in E$.
\end{itemize}

We set $k = \infty$, so that any overlap-free set $S$ is a valid solution. Note that the reduction is polynomial-time; the number of nodes in the {\sc Summarize} instance is $O(|E|\cdot|V|^2)$.

We claim that there is a solution to {\sc MIS} of size $m$ if and only
if there is a solution $S$ to this instance of {\sc Summarize} with
$w(S) \geq m + \beta|E|$. This implies that {\sc Summarize} is NP-hard.

``$\Rightarrow$'': Let $V' \subseteq V$ be an independent set with
$|V'| = m$. Let $S$ be the union of the sets $S_1 := \{ N_v\ |\ v
\in V' \}$, $S_2 := \{ N_{v,w}\ |\ v \notin V' \}$, $S_3 := \{
N'_{v,w}\ |\ v \in V' \wedge w \notin V' \}$.

Since for all edges $(v,w)$ either $v$ or $w$ is not in $V'$, we have
that either $N_{v,w} \in S_2$ or $N'_{v,w} \in S_3$, and thus $|S_2
\cup S_3| = |E|$. Therefore, we have $w(S) = m + \beta|E|$. It remains to show that
$S$ is overlap-free. There are no overlaps in $S_2 \cup S_3$, since no
two elements overlap in the third dimension ($c_{v,w}$), neither are
there overlaps in $S_1 \cup S_2$, since elements differ in the first
dimension, and in $S_1 \cup S_3$, since elements differ in the second
dimension.

``$\Leftarrow$'': Given a solution $S$ to the {\sc Summarize} problem
with $w(S) \geq m + \beta|E|$, we need to construct an independent set
of size at least $m$. For each edge $(v,w)$, $S$ can contain at most
one of $N_{v,w}$ and $N'_{v,w}$, since the two nodes overlap. However,
wlog we can assume that $S$ contains exactly one of them: It is not
hard to see that if $S$ contains neither, and adding one of them would
create an overlap, then $S$ has to contain either $N_v$ or
$N_w$. However replacing e.g. $N_v$ by $N_{v,w}$ will increase the
weight of $S$ by $\beta - 1 = \varepsilon$. Note that this also implies that
for any edge $(v,w)$, $N_v$ and $N_w$ cannot both be in $S$.

So wlog, $S$ contains exactly one of $N_{v,w}$ and $N'_{v,w}$ for each
edge $(v,w)$. Since $w(S) \geq m + \beta|E|$, at least $m$ of the
weight comes from nodes
$N_v$. Thus, the set $V' := \{ v\ |\ N_v \in S \}$ satisfies $|V'|
\geq m$. As we observed before, for each edge $(v,w)$, not both $v$
and $w$ can be in $V'$, thus it is an independent set of the desired
size.
\end{proof}

\subsection{Lower bounds for the algorithm}

We now construct ``hard'' input instances for {\sc Summarize} for
which our algorithm outputs a solution that has $(2/3)^{d/3}$  of the weight of the optimal solution, when $d$ is a multiple of $3$. It follows that our
algorithm is at best a $(3/2)^{\floor{d/3}} > 1.144^{d-2}$ approximation
algorithm. Our strategy will be to construct an instance with lots of
conflicts.

\begin{thm}
  \label{thm:lb}
For every integer $m \geq 1$, there is an instance of {\sc Summarize}
with $d = 3m$ dimensions that has an overlap-free solution with weight
$3^m$, while the optimal conflict-free solution has weight $2^m$.
\end{thm}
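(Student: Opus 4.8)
The plan is to build the instance by taking the $m$-fold product of the three-tree gadget from Example~\ref{ex:simple}, so that dimensions $3i-2, 3i-1, 3i$ (for $i = 1, \dots, m$) each carry one copy of the trees $T_1, T_2, T_3$ of height two with two leaves. The weight function will be supported only on the ``conflict triples'': for each choice of a node from $\{(r_1,b_2,a_3),\ (a_1,r_2,b_3),\ (b_1,a_2,r_3)\}$ independently in each of the $m$ blocks, we get a node in the $3m$-dimensional product, and we assign it weight $1$. There are $3^m$ such nodes, and we claim they are pairwise non-overlapping and contain no smaller conflict-free subset of the same size.

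First I would verify the overlap-free claim: two distinct nodes in this family differ in at least one block, and within that block the two chosen gadget-nodes are among the three of Example~\ref{ex:simple}, which are pairwise non-overlapping; since non-overlap in a single dimension suffices for non-overlap in the product, the whole family is overlap-free, giving an overlap-free solution of weight $3^m$. Next, for the conflict-free upper bound, I would argue that in each block the three gadget-nodes form a conflict (as noted in Example~\ref{ex:simple}), and that a conflict-free solution cannot contain all three simultaneously even after embedding into the larger product — more precisely, I would show by an exchange/projection argument that any conflict-free subset of the weighted nodes uses at most $2$ of the $3$ ``directions'' available in each block, hence has size (and weight) at most $2^m$. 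The cleanest route is induction on $m$: project onto the first block's three dimensions, observe that a conflict-free set $S$ cannot realize all three gadget-nodes in coordinates $(1,2,3)$ without creating a conflict (one checks the definition directly on those three first coordinates, since the remaining coordinates only restrict further), so $S$ splits into at most two groups according to which of the two used gadget-nodes it agrees with in block $1$; each group, restricted and with its first block stripped, is a conflict-free solution for the $(m-1)$-block instance, so by induction each has weight at most $2^{m-1}$, for a total of at most $2^m$. Then I must also exhibit a conflict-free solution achieving $2^m$, namely by fixing, in each block, two of the three gadget-nodes (say drop $(b_1,a_2,r_3)$) consistently; one verifies this $2^m$-node set is overlap-free (as a subset of the above family) and conflict-free (in every block only two directions appear, and along the ``missing'' coordinate the required ancestor witness exists). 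Finally, by Lemma~\ref{lem:optimal-conflict-free} the Cascading Analysts algorithm outputs exactly a maximum-weight conflict-free set, so on this instance it outputs weight $2^m$ against an optimum of $3^m$.

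The main obstacle will be the conflict-free upper bound: I need to be careful that the argument ``a conflict-free set hits at most two of the three gadget-nodes per block'' is genuinely about the structure of conflicts in the \emph{full} $3m$-dimensional space, not just in isolation. The subtlety is that a set could conceivably use all three gadget-nodes in block $1$ while ``paying for it'' with non-overlap coming from some \emph{other} block — but this is exactly what the definition of conflict forbids: a conflict only asks, dimension by dimension, for an ancestor witness \emph{inside the set}, and in block $1$ any three nodes chosen as the three distinct gadget-nodes of Example~\ref{ex:simple} provide such witnesses in all three of coordinates $1,2,3$ regardless of what happens elsewhere, so together with the (trivial, since witnessed by themselves) witnesses in the other $3m-3$ coordinates, such a triple is already a conflict contained in the set. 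Making this dimension-bookkeeping precise, and then setting up the induction so the ``strip off block $1$'' step preserves both overlap-freeness and conflict-freeness, is the only real work; the rest is the counting $3^m$ vs.\ $2^m$ and the appeal to Lemma~\ref{lem:optimal-conflict-free}.
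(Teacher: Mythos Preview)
Your construction and the overlap-free argument match the paper's exactly. The gap is in the conflict-free upper bound, specifically in the sentence ``together with the (trivial, since witnessed by themselves) witnesses in the other $3m-3$ coordinates, such a triple is already a conflict''. A node can only serve as a witness in dimension $j$ if its $j$-th coordinate is an ancestor of \emph{every} node's $j$-th coordinate, not just its own; there is nothing trivial about the remaining coordinates.

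Concretely, for $m=2$ take
\[
x=(r,b,a,\;r,b,a),\qquad y=(a,r,b,\;a,r,b),\qquad z=(b,a,r,\;a,r,b).
\]
These project to all three gadget-nodes in block~1, yet in dimension~6 the coordinates are $a,b,b$, with no root present and $a,b$ incomparable, so $\{x,y,z\}$ is not a conflict. In fact every pair among $x,y,z$ also fails to be a conflict (check dimension~6 for $\{x,y\}$ and $\{x,z\}$, dimension~1 for $\{y,z\}$), so $\{x,y,z\}$ is a conflict-free set that hits all three gadget-nodes in block~1. Thus your intermediate claim ``a conflict-free set uses at most two of the three directions in each block'' is already false at $m=2$, and the induction cannot be launched from a pre-chosen block.

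The paper's fix is simply not to choose the block in advance. Since the whole set $T$ (with $|T|\ge 2$) is not a conflict, there exists \emph{some} dimension $i$ with no witness; for these height-two trees this forces $x_i\in\{a_i,b_i\}$ for every $x\in T$. Splitting $T=T_a\cup T_b$ on that coordinate, each part is constant on the entire block containing $i$ (exactly one gadget-node has $a_i$ in position $i$, exactly one has $b_i$), so stripping that block gives conflict-free sets for the $(m-1)$-block instance and the induction goes through. Your inductive scaffold is right; only the choice of which block to strip must be dictated by $T$, not fixed a priori.
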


\begin{proof}
Note that for $m=1$, such a problem instance is given by the conflict
in Example~\ref{ex:simple}. We obtain the general case by raising this
example to the ``$m$-th power'', as follows.

Our instance of {\sc Summarize} has $d = 3m$ dimensions, where every
dimension has a tree with root $r_i$ and two children $a_i$ and
$b_i$. We group the dimensions into sets of size three, and define for $i
\in \{ 1, \ldots, m \}$:
$$S_i := \left\{ (a_{3i-2}, b_{3i-1}, r_{3i}),
(b_{3i-2}, r_{3i-1}, a_{3i}), (r_{3i-2}, a_{3i-1}, b_{3i}) \right\}$$
Note that these are copies of Example~\ref{ex:simple} restricted to
dimensions $3i-2, 3i-1, 3i$.  The weight function $w$ is 1 for all nodes in
$S := S_1 \times S_2 \times \cdots \times S_m$; all other
nodes have weight zero. We set $k = \infty$.

By construction, $S$ has $3^m$ elements, and therefore a total weight
of $3^m$. For the first claim, we now show that $S$ is
overlap-free. Consider two different elements of $S$. Clearly, they
must differ on some factor $S_i$. But by definition of $S_i$ that
means that they are disjoint.

For the second claim, we now inductively prove that every
conflict-free solution of this problem instance has a weight of at most
$2^m$. For $m=1$, the claim follows from Example~\ref{ex:simple}. For
$m \geq 2$, let $T \subseteq S$ be a conflict-free solution. We can
assume $|T| > 1$, since otherwise $T$ clearly is of size less than
$2^m$.

Since $T$ is conflict-free, there must be a dimension $i$ such that
$x_i \in \{ a_i, b_i \}$ for all $x \in T$ (as per the definition of a
conflict). By symmetry of our construction, we can wlog assume $i =
d$. Then $T$ can be decomposed as a disjoint union $T = T_a \cup T_b$,
where $T_a = T \cap \{ x | x_d = a_d \}$, and $T_b = T \cap \{ x | x_d
= b_d \}$.  By construction of $S$, the nodes in $T_a$ have the same
values in factor $S_m$. Removing the last three dimensions from $T_a$,
we obtain a set of nodes in $S_1 \times \cdots \times S_{m-1}$ with
the same cardinality. This set is also conflict-free (otherwise $T_a$
would contain a conflict), and forms a solution to the instance of
size $m-1$.  By induction, $T_a$ can have size at most $2^{m-1}$.
Similarly, $|T_b|\leq 2^{m-1}$, and therefore $|T| = |T_a| + |T_b|
\leq 2^m$.
\end{proof}

\begin{remark} [Role of Conflicts]
  \label{rem:conflicts}
  Note the fundamental role played by conflicts in the proofs of Theorem~\ref{thm:ThreeTreeHardness} and in Theorem~\ref{thm:lb}. The simple conflict in Example~\ref{ex:simple} underlies the constructions in both proofs. As stated by Lemma~\ref{lem:optimal-conflict-free}, in the absence of conflicts, {\sc Summarize} can be solved optimally. Therefore an interesting open question is to ask how frequently large-weight, non-overlapping conflict structures arise in practice. In the context of summarizing metric changes, it is likely that these are fairly rare because for Example~\ref{ex:simple} to manifest, there have to be three fluctuations, each from three separate causes, but the causes are such that they don't overlap with each other. It would be worthwhile to test this conjecture in practice.
\end{remark}

\begin{remark} [Dense input versus sparse input]
There is a significant difference between our work and
Multi-structural Databases~\cite{Fagin} in how the input is provided. In
\cite{Fagin}, the input consists only of the subset $V'$ of nodes that
have non-zero weight. Let us call this a \emph{sparse} input. In
contrast, we assume that the weights are specified for every node $v
\in V$. That is, we assume that the input is \emph{dense}. We chose
this modeling assumption because in practice (i.e. for the
applications in Section~\ref{sec:application}), we found that almost
all nodes in $V$ had non-zero weight. Even though our algorithm is
described for the dense case, it is straightforward to apply it to the
sparse case as well.

Sparseness plays a critical role in the hardness results
of~\cite{Fagin}. They perform a reduction from independent set (a
well-known NP-hard problem). Their reduction can be done in polynomial
time only if the input is sparse. In fact, in their reduction, the
number of nodes is equal to the number of trees.

These results therefore do not imply that the dense case is also
NP-hard. In principle, it is possible that NP-hardness disappears
because we `only' need to be polynomial in the product of the sizes of
the trees. Theorem~\ref{thm:ThreeTreeHardness} shows that the problem
is NP-hard even with dense input. So the hardness is not due to
the density of input, but due to the presence of conflicts.
\end{remark}

\begin{remark} [Comparison to Rectangle Packing]
  It is instructive to compare {\sc Summarize} to the problem of the max-weight packing of axis-aligned hyper-rectangles in a space so that no two rectangles spatially overlap. Nodes in our problem correspond to hyper-rectangles in that setting (cf~\cite{Muthu,Berman,Agarwal}). That problem is optimally solvable for one dimension and NP-Hard for two more more dimensions. It can be shown that every instance of our problem is an instance of that problem, but not vice versa. This is because a tree cannot model the case where two hyper-rectangles intersect along a certain dimension, but neither contains the other. What Theorems~\ref{thm:ThreeTreeHardness} and~\ref{thm:twotree} together show is that the restriction to hierarchical hyper-rectangles now allows positive results for two dimensions and ``postpones'' the NP-Hardness to three or more dimensions.  We borrow some proof ideas for Theorem~\ref{thm:log} from~\cite{Agarwal}.
\end{remark}

\section{Applications of the algorithm}
\label{sec:application}

The Cascading Analysts algorithm is fairly general. The key choices
when applying the algorithm to a specific context are to pick the
metrics and dimensions to apply them over, and a sensible weight
function. We have applied the Cascading Analysts algorithm to helping
advertisers debug their advertising campaigns via a report called the
``top movers report'' \cite{TopMovers}, and to helping websites
analyze their traffic within Google Analytics
\cite{AutomatedInsights}.

\subsection{Interesting Weight Functions}
\label{sec:weight-function}

In Section~\ref{sec:problem}, we discussed a very simple weight function used to analyze metric changes. For each node $v$, $w(v)$ was set to the absolute value of the difference in the metric values for the node between two time periods $|t_v - l_v|$, where $l_v$ is the metric value for the current time period, and $t_v$ is the  metric value for a past time period (pre-period). In this section, we present other alternatives that result in different types of summarization.

\subsubsection{Modeling Composition Change}
\label{sec:composition}
If the data displays a generally increasing trend (or a generally decreasing trend), it is possible that almost all the slices are data are generally increasing. So the weight function $w(v) = |t_v - l_v|$ essentially becomes $w(v) = t_v - l_v$, and the root node is a degenerate, optimal solution, because it has at least as much weight as any set of non-overlapping nodes. In practice we may still want to separate low growth slices from high growth ones, because the former could still be improved. A simple option is to compare the mix or the composition of the metric rather than magnitude of the metric, that is
\begin{equation}
  \label{eq:composition}
w(v) := |\frac{t_v}{\sum_{v \in V} t_v}  - \frac{l_v}{\sum_{v \in V} l_v}|
\end{equation}

This way, the output of {\sc Summarize} consists of nodes that were a large fraction of the market in one of the two time periods, and a relatively small fraction in the other.

This technique is also useful in performing Benchmarking~\cite{Benchmarking}. In Benchmarking the goal is to compare the metric for a business against similar businesses. For instance, comparing the traffic of one messaging app against the traffic of another app across dimensions such as location, user age, etc. Here, $t_v$'s correspond to amount of traffic for the protagonist's company/app and $l_v$'s represent traffic for a benchmark app. It is usually the case that one of the businesses/apps is much larger, and therefore it makes more sense to compare the composition of their markets as in Equation~\ref{eq:composition}.

\subsubsection{Modeling Absolute Change v/s Relative Change}
\label{sec:box-cox}

A slightly different issue is that a large relative change in a metric (say from $\$500$ to $\$1000$) may be  more interesting than a small relative change in metric (say from $\$10\,500$ to $\$11\,000$),  because the latter is probably due to usual fluctuations (noise), whereas the former is a real event worth responding to. However, focusing entirely on relative change could produce tiny segments as output (say the metric changes from $1$ to $100$). In practice, it makes sense to weigh relative changes to some extent. To model this, we apply a standard technique called a Box-Cox transform~\cite{BoxCox} by setting the weight $w(v) := \frac{l_v^{1-m} - t_v^{1-m}} {1-m}$. For $m=0$, this reduces to the absolute value of the difference $|l_v - t_v|$. When $m \rightarrow 1$, this approaches $\log(l_v) - \log(t_v)$, which models a relative difference. In practice we found it useful to set $m$ in the range $[0.1, 0.3]$.

\subsection{Analyzing the Performance of Ad Campaigns}

Google's Adwords platform supports advertisers that run pay-per-click advertising campaigns. There are many factors that affect the performance of a campaign~\cite{AdWords}.
The advertiser's own actions, an advertiser's competitor's actions, a change in user behavior, seasonal effects, etc. Given that large amounts of money are often at stake, most advertisers monitor and react to changes in their spend very carefully.

A routine task for advertisers is therefore to (1) periodically identify the ``top moving'' data-segments, and (2) react to these changes sensibly. We seek to automate (1). Adapting the cascading analysts algorithm involves carefully choosing the dimensions and the weight function.





\begin{figure*}
  \centering
    \includegraphics[width = 4in] {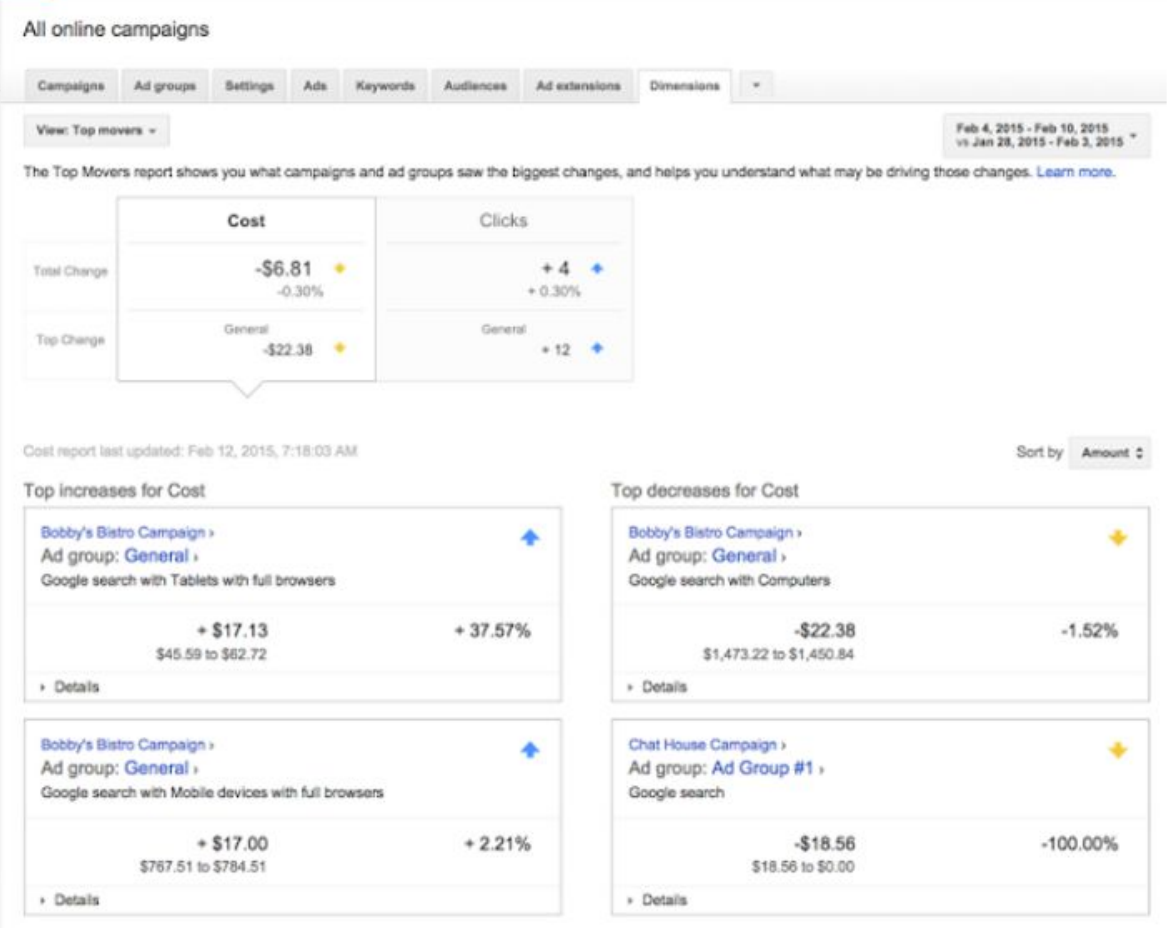}
  \caption{Google Adwords ``Top Movers'' report, showing traffic segments responsible for growth and decline of spending.}
\end{figure*}


We use three hierarchical dimensions to partition campaign performance. The first is the campaign-adgroup-keyword hierarchy.
Each campaign can have multiple adgroups, and an adgroup can have multiple keywords, forming a tree structure.
For example, a gift shop can have
one campaign for flowers and one campaign for cards. Within the flower campaign,
it can have one adgroup for each zip code that the shop serves.
A second hierarchical dimension partitions the user by the kind of device they use. The top level split can be Mobile, Desktop, and Tablet, and each of these can further be split by the make and model of the device.
And the third dimension is the platform on which the ads were shown, for instance, Google search, on the Web, or on Youtube.








We briefly describe how the weight function is modeled. The metrics of interest are the spend of the advertiser, the number of clicks received by the ads, and the number of views (impressions) received by the ads. We usually compare two time-slices. The weight function is modeled as the BoxCox transformation applied to the values of the metric in the two time-periods (see Section~\ref{sec:box-cox}).

\subsection{Understanding Website Traffic}

Google Analytics helps website and app developers understand their users and identify opportunities to improve their website (and similarly for phone apps)~\cite{GA}. There are many factors that affect the traffic to a website.
Changes in user interests, buzz about the website in social media, advertising campaigns that direct traffic to the website, changes to the website that make it more or less engaging, etc. The ``Automated Insights'' feature of Google Analytics~\cite{AutomatedInsights} analyzes the traffic data and identifies opportunities or room for improvement. The Cascading analysts algorithm is used within a section of this feature that identifies focus-worthy segments of traffic. The feature involves other algorithms and curation to make the insights actionable.

We now discuss the metrics and the dimensions. Google Analytics is used by a variety of apps and websites, for example by content providers like large newspapers, ecommerce platforms, personal websites or blogs, mobile apps, games, etc.
Different dimensions and metrics are important for different businesses. Consequently, Google Analytics has a very large set of dimensions and metrics that it supports. Some metrics include visits to the website, number of users, number of sessions, and a metric called goals whose semantics are user-defined. Some examples of dimensions include the source of the traffic to the website (search engines, social network sites, blogs, direct navigation), medium (was the visit due to an ad, an email, a referral), geographic dimensions (continent, country, city), device related dimensions (as in our AdWords example above) etc. The Cascading analysts is applied to several coherent groupings of dimensions and metrics. For instance, we may run the algorithm to compare the composition (see Section~\ref{sec:composition} of visits in one month versus another, with three dimensions like source, geography and device. (Here we compare compositions rather than the raw metric magnitudes because large seasonal effects could make all the data trend up or down.) This produces several candidate segments that are then turned into insights reports.

\section{Related Work}
\label{sec:related}


\subsection{OLAP/Drill-Down}

There is a large body of literature on OLAP~\cite{OLAP}. As discussed in the introduction, there is justified interest in automating data analysis for it. There is work on automating or helping the automation of drill-downs~\cite{Sarawagi:2001:UMA:767141.767148, Sarawagi00user-adaptiveexploration, Sarawagi98discovery-drivenexploration,DBLP:journals/pvldb/GebalyAGKS14,Mampaey:2011:TMI:2020408.2020499}. These attempts to summarize patterns in the data use information-theoretic approaches rather than explain the difference in aggregate metrics. Candan et al~\cite{DBLP:conf/edbt/CandanCQS09} propose an extension to OLAP drill-down that takes visualization real estate into account, by clustering attribute values. Again, this is not targeted to a specific task like explaining the change in aggregate metrics.

The database community has devoted a lot of attention to the problem of processing taxonomical data. For example \cite{Ben-Yitzhak:2008:BBF:1341531.1341539, springerlink:10.1007:s10618-007-0063-0, Qi:2008:SOO:1376616.1376703, springerlink:10.1007:978-3-642-16373-9:27} consider the same or related models. Broadly, they concentrate on the design issues such as query languages, whereas we focus on computational issues and optimization.

There is recent work by Joglekar, Garcia-Molina, and Parameswaran~\cite{Smartdrill}, which we call Smart Drill-Down, that like us attempts to ``cover'' the data. They trade off dual objectives of ``covering'' the data and ``interestingness'' with guidance from an analyst. This trade-off is inherently data-dependent, because the two objectives are like apples and oranges. The analyst must try some trade-off and rebalance it based on the results. In contrast, as we discuss in Section~\ref{sec:application}, we have a single objective function that directly  models the summarization problem, and therefore our approximation bounds have clear meaning to an analyst or decision-maker. So while the formulation in Smart Drill-Down~\cite{Smartdrill} may be more general, our solution is more automated for our specific problem. Two other differences between their work and ours is that they allow double-counting, which as we discussed is undesirable for our application, and their algorithm is top-down. Indeed, most of OLAP is inherently about top-down exploration. But a top-down algorithm may omit important events in the data. For instance, if you have two adjacent zipcodes in a metro, one with an increase in unemployment, and another with a decrease, the two phenomena could cancel in a way that the net effect on the metro is nearly zero. OLAP on the metro would seem to suggest that nothing interesting is happening within the metro, whereas there may be.

There is some work~\cite{Bu:2005:MSH:1083592.1083644, Lakshmanan:2002:GMA:1287369.1287435, DBLP:conf:kdd:XiangJFD08, Geerts04tilingdatabases} on finding hyper-rectangle based covers for tables. In contrast, our work is about ``packing'', i.e., we explicitly deal with the double-counting issue.

\subsection{Multistructural Databases}
We are directly inspired by the problem formulations of Fagin et al.~\cite{Fagin} and Sarawagi~\cite{Sarawagi}.
Fagin et al~\cite{Fagin} formulate three operators as optimization problems over lattices---{\sc Divide} finds balanced partitions of data, {\sc Differentiate} compares two slices of data along some dimensions to find regions that account for the difference (this is the scenario of Example~\ref{ex:running} in the introduction), and {\sc Discover} finds cohesive, well-separated clusters of data. (Sellam and Kersten~\cite{sellam:meet} also work a formulation similar to {\sc Divide} in the context of designing a query language for exploration.). {\sc Discover} and {\sc Differentiate} are algorithmically similar, differing only in how the weights on the hypernodes are computed. Our algorithms apply to both of these operations. They show that these operators are hard to approximate on hierarchies represented as lattices, and present dynamic programming algorithms that provide approximations for a sequential composition of tree-like or numeric dimensions. Fagin et al.~\cite{Fagin2} extends these results to a wider class of operators implementable via objective functions that are distributive on the slices, discuss different functions to combine the weights of the output segments, and presents experimental validation for their techniques on a large data set. In a series of papers, Sarawagi et al~\cite{Sarawagi, Sarawagi2, Sarawagi3} discuss information-theoretic formulations that help automate the exploration of datacubes that store aggregates---in the sense of our model, the focus is on designing the objective function (including the weight function). The algorithmic results are similar to the positive results of Fagin et al\cite{Fagin} discussed above.

Our main contribution is to identify practically relevant restrictions of these models, and to supply interesting algorithms. The hardness results in both papers were devastating because the models were overly general (they used lattices instead of a product of a small number of trees). In contrast, their algorithmic results were restrictive. Recognizing that the problem is hard to solve over a lattice, both papers extrinsically convert the lattice into a tree by \emph{sequentially} composing dimensions (pasting one dimension below another). This precludes certain data-cubes from being candidate solutions.

\begin{example} \label{mutex}
Consider US employment data in a two dimensional space: Location $\times$ Gender. If we compose the dimensions strictly as location before gender, then the following pair of non-overlapping nodes will never be considered in the same solution: (North-east, Male) and (New York, Female). We can only reach both of these nodes by first splitting the space on gender, and then splitting each of subspaces independently on location.
\end{example}

One may try all possible sequential compositions of the dimension trees - arguably an efficient operation when the number of dimensions is a constant - and pick the best one. The following example proves that even with this improvement, we can only expect a $\Omega(n^{1/4})$-approximate solution. By Theorem~\ref{thm:twotree}, our algorithm is optimal for two dimensions. So, we have a significant improvement even for two dimensions.

\begin{example} \label{mutex-bad}
Consider an instance of \problemname with two dimensions, each with the same topology: a star with $\sqrt{m}$ strands (paths), each with $\sqrt{m}$ nodes, and a root. 
Suppose that the strands are labeled $1 \ldots \sqrt{m}$ from left to right, and the \TVs within a strand are labeled $1 \ldots \sqrt{m}$, then we can label each \TV by a pair (strand index, within-strand index). Notice that $n = (m + 1)^2$.

Suppose further that the weights are in $\{0,1\}$. The only lattice nodes with a weight of $1$ are the $m$ nodes indexed by the quadruple $(i,j),(j,i)$ for $i \in \{1 \ldots \sqrt{m}\}$ and  $j \in \{1 \ldots  \sqrt{m}\}$. The optimal solution has value $m$ because none of the non-zero weight nodes overlap---if two nodes overlap in one dimension, then they belong to different strands in the other dimension.

Now, by Observation~\ref{mutex}, for a sequential composition $T_1$ followed by $T_2$, we cannot pick a pair of lattice nodes that belong to the same strand in dimension $T_1$, yielding a solution of size at most $\sqrt{m}$. The argument for the other sequential composition is symmetric, and we have at best a $\sqrt{m}$ approximation.
\end{example}

\section{Conclusion}
We study the problem of summarizing metric changes across hierarchical dimensions. Our main contribution is a new algorithm called Cascading Analysts that is a significant improvement over prior algorithms for this problem. We apply this algorithm to two widely used business-intelligence tools (Google Adwords and Google Analytics) that help advertisers and website/app developers analyze their core metrics.

Studying concrete applications gave us an interesting lens on the computational hardness of the summarization problem. We identified a practically relevant restriction of the previously studied model of hierarchical data to a product space of trees -- prior work studied lattices. Without this restriction, the problem on lattices is computationally hard, although, this hardness is relatively uninteresting, i.e., it stems purely from the ``dimensionality'' of data. In practice, we note that the summarization problem is useful to solve even when there are a few (say less than five) dimensions as in our applications. Further investigation reveals a more interesting source of hardness---the presence of structures called \emph{conflicts} that occur in data with three or more dimensions. Fortunately, this source of hardness does not preclude approximation.

One direction of future work is to better understand the prevalence of \emph{conflicts} in an ``average case'' sense. Our belief (see Remark~\ref{rem:conflicts}) is that large weight conflicts ought to be rare in practice. It would be interesting to formalize this in a beyond-worst-case-analysis sense~\cite{beyond}.

\bibliographystyle{abbrv}
\bibliography{drill}

\end{document}